\documentclass[10pt,a4paper]{IEEEtran}
\usepackage[a4paper, left=1.225cm, right=1.225cm, top=1.85cm, bottom=4.4cm]{geometry}
\IEEEoverridecommandlockouts
\usepackage{xcolor,soul,framed} 
\colorlet{shadecolor}{yellow}
\usepackage[pdftex]{graphicx}
\graphicspath{{../pdf/}{../jpeg/}}
\DeclareGraphicsExtensions{.pdf,.jpeg,.png}

\usepackage[cmex10]{amsmath}
\usepackage{array, cite,xcolor}
\usepackage{mdwmath}
\usepackage{mdwtab}
\usepackage{eqparbox}
\usepackage{url}

\usepackage{xcolor, multicol, tabularx,booktabs, amssymb, dsfont, multirow, cite}
\usepackage[utf8]{inputenc}
\usepackage{amsmath, cases, bbm, graphicx, amsthm, algpseudocode, algorithm, caption, subcaption}

\usepackage{balance}
\usepackage{caption}
\usepackage{xcolor}
\usepackage{mathrsfs}
\usepackage{tikz}
\usetikzlibrary{calc}
\newcommand*\circled[1]{\tikz[baseline=(char.base)]{
    \node[shape=circle, draw, inner sep=1pt, 
        minimum height={\f@size*1.6},] (char) {\vphantom{WAH1g}#1};}}

\usepackage{adjustbox} 

\usepackage{cleveref}

\usepackage{titlesec}

\makeatletter

\newcommand\sfcodefork{%
  \ifnum\the\spacefactor=1000 \expandafter\@firstoftwo\else\expandafter\@secondoftwo\fi
}

\makeatother

\newtheorem{lemma}{Lemma}

\crefname{lemma}{Lemma}{lemmas}

\newcolumntype{C}{>{\centering\arraybackslash}X} 
\newcolumntype{s}{>{\arraybackslash\hsize=.6\columnwidth}X}

\crefname{figure}{Fig.}{Figs.}  
\Crefname{figure}{Fig.}{Figs.}  

\crefname{table}{Table}{Tables}  
\Crefname{table}{Table}{Tables}  

\begin{document}


\bstctlcite{IEEEexample:BSTcontrol}

\title{\huge Learning-Based Collaborative MEC for LLM Inference with Soft-Deadline Awareness via Transformer-Enhanced PPO}



\author{\IEEEauthorblockN{
    Ngoc~Hung~Nguyen and
    Bjorn~Landfeldt \\
}\IEEEauthorblockA{
    Department of Electrical and Information Technology, Lund University, P.O. Box 117, Lund 22100, Sweden
}
\textit{Corresponding author}: Ngoc~Hung~Nguyen, email: ngoc\_hung.nguyen@eit.lth.se\vspace{-10pt}
}

\maketitle

\begingroup
\renewcommand\thefootnote{}
\footnotetext{%
\copyright~2026 IEEE. Personal use of this material is permitted. Permission from IEEE must be obtained for all other uses, in any current or future media, including reprinting/republishing this material for advertising or promotional purposes, creating new
collective works, for resale or redistribution to servers or lists, or reuse of any copyrighted component of this work in other works.%
}
\addtocounter{footnote}{-1}
\endgroup

\begin{abstract}
This paper investigates collaborative mobile edge computing (MEC) servers for large language model (LLM) inference under soft deadline constraints. In this system, to improve the quality of service, computations are expected to be completed within their deadlines. However, due to dependencies among tasks or subtasks, any missed deadline can lead to catastrophic consequences for the entire request. In this context, this work proposes an extended deadline mechanism with constrained flexibility. The main challenges lie in handling large-scale computations under strict latency constraints while limiting the number of allowable deadline extensions, especially in the presence of task dependencies within each request. To tackle these challenges, we develop a transformer-enhanced proximal policy optimization (PPO) framework that enables efficient collaboration among MEC servers. The proposed approach aims to maximize the number of tasks completed within their deadlines while minimizing the use of deadline extensions. By capturing temporal dependencies and cross-server interactions, the transformer improves decision-making for task migration. Simulation results demonstrate that the proposed method significantly outperforms conventional PPO and heuristic-based approaches in terms of task completion rate and overall system efficiency.
\end{abstract}
\begin{IEEEkeywords}
Mobile edge computing, collaborative MEC, task migration, LLM inference, transformer, PPO.
\end{IEEEkeywords}
\IEEEpeerreviewmaketitle
\section{Introduction}
Recently, the rapid development of artificial intelligence (AI) has changed the way people use applications. Instead of using normal search engines, people increasingly tend to raise questions using generative AI (GenAI) applications. The applications not only provide an information summarization but can also process many other types of tasks, such as video creation, image analysis, etc. These services are attractive to a large number of customers but comes at the cost of increased computation for each request or prompt, bringing the system to a high workload state, with large response times. However, the potential application of this AI architecture is wide-ranging in the future, including real-time support (e.g., Avatar-recommendation system) in Virtual and Augmented Reality (XR) systems, or orchestration of intelligent transportation systems in smart cities \cite{qu2025mobile,li2025llm}. Thus, research is ongoing on how the computation of these kinds of tasks gradually becomes more and more crucial in accuracy, latency, and energy consumption \cite{zhang2024edgeshard, husom2025sustainable}.

Mobile edge computing (MEC) provides services close to the edge of the computing system, near end users. However, MEC has limited computational resources compared with central cloud platforms. Nevertheless, MEC has an advantage over cloud platforms in that latencies can be reduced and network bottlenecks can be avoided \cite{mach2017mobile}. However, computing high-workload tasks with complex architectures, such as LLMs, is not easy for a single MEC server. Instead of using a single server to process heavy workloads, many works have proposed collaboration mechanisms among MEC servers. This is expected to become an important direction for improving task computation efficiency in MEC systems \cite{hao2024hybrid,liu2026edge}. 

In fact, once heavy workload applications such as LLMs can be successfully deployed on edge infrastructure, they can enable real-time intelligent services with low latency and enhanced privacy for end devices. In practice, the communication load in both uplink and downlink can be quite heavy, such as video, raw images, or data files. Computation at the edge is expected to reduce latency and improve users' Quality of Service (QoS) experience overall \cite{wang2024end}. Therefore, enabling efficient heavy workload execution on MEC systems is not only beneficial but also necessary for supporting next-generation intelligent applications. Therefore, in this work, we utilize an edge-edge collaboration mechanism to enhance the performance of computation for heavy workload tasks, especially for LLM models \cite{yao2025enhancing}.

Deep reinforcement learning (DRL) has been widely adopted for solving complex decision-making problems, with various advanced algorithms developed in recent years. Among them, proximal policy optimization (PPO) has emerged as a robust and stable policy-gradient method for many practical applications \cite{yang2024beyond}. However, PPO may not consistently achieve high performance in highly dynamic environments due to its sensitivity to hyperparameter settings and its limited capability in capturing long-term temporal dependencies. In the considered MEC system, task migration decisions are influenced not only by the current system state but also by historical information, such as queue evolution, workload fluctuations, and task dependency patterns. Conventional PPO, which relies primarily on instantaneous observations, may lead to myopic decisions under such conditions \cite{hamalainen2020ppo, adkins2024method}. To address this limitation, we incorporate a transformer architecture into the PPO framework. By leveraging self-attention mechanisms, the transformer effectively captures temporal dependencies and cross-server interactions from historical system states, enabling more informed and context-aware decision-making for task migration in collaborative MEC systems.

\begin{enumerate}
    \item We propose a collaborative MEC framework for LLM inference, where each request is modeled as a hierarchical directed acyclic graph (DAG) consisting of multiple tasks and subtasks with dependency constraints. Based on this framework, we formulate a task migration and soft-deadline-aware scheduling problem, and identify its key challenges arising from task dependencies, resource heterogeneity, and latency constraints. The problem is then modeled as a Markov decision process (MDP), capturing system dynamics including queue states, task dependencies, and communication conditions as well as servers' information.

    \item We develop a transformer-enhanced PPO algorithm that exploits temporal dependencies and cross-server interactions to improve decision-making in task migration. We design a normalized system-level reward function that aligns with the optimization objective, enabling stable and effective policy learning.

    \item Simulation results demonstrate that the proposed approach significantly outperforms conventional PPO and static (no migration) task baselines in terms of task completion rate and deadline efficiency.
\end{enumerate}

\begin{figure}[t!]
    \centering       
    \includegraphics[width=0.82\linewidth]{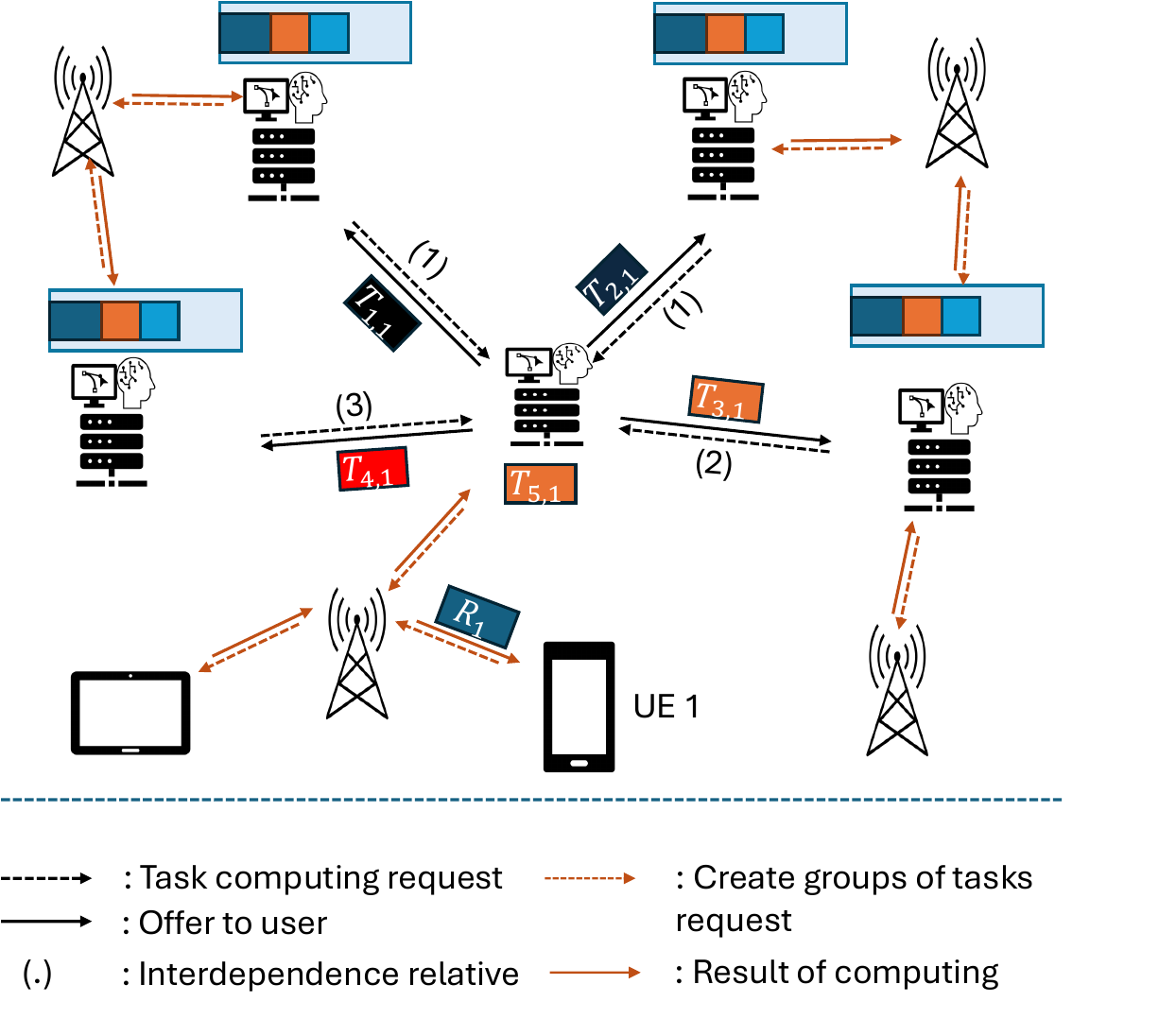}
    \caption{Illustration of system model.}
    \label{fig:systemmodel}
\end{figure}

\section{System Model and Problem Formulation}
\subsection{System Model}
As illustrated in Fig. \ref{fig:systemmodel}, we consider an edge-edge collaboration system expected to enhance the computation capacity of high-demand tasks.
The system includes a set of $|\mathcal{S}|$ MEC servers $\mathcal{S}=\{S_1, S_2, \cdots, S_{|\mathcal{S}|}\}$ and a set of users $\mathcal{U} = \{u_1, u_2, \cdots u_{|\mathcal{U}|}\}$. Each of these servers is provided with a computation capacity of $f_s$ [FLOPs/second]. Note that the system is heterogeneous, meaning that the computational resources of each MEC may vary. Each MEC is distributed in a different region within the deployment area, and each region is serviced by a base station using orthogonal frequency multiple access (OFDMA). The designed system allows for server collaboration through a high-speed physical backhaul by applying remote direct memory access (RDMA). Therefore, tasks in different MEC nodes can be migrated or shared with ultra-low communication latency, enabling real-time cooperative processing across the entire network. Collaboration among MEC servers plays a crucial role in this work, as the system is expected to handle delay-sensitive tasks modeled by DAGs. Let $\mathbf{T}_{k,i}^{\mathrm{[sub]}} = \{T_{k,i}^{(1)}, T_{k,i}^{(2)}, \ldots, T_{k,i}^{(d_{k,i})}\}$ denote the set of subtasks associated with task $T_{k,i}$ in request $\mathcal{R}_i$, where requests arrive according to a Poisson process with rate $\lambda$ [requests/user/s], and $d_{k,i}$ denotes the number of subtasks in $T_{k,i}$. Each task is represented as $T_{k,i} \triangleq \left(\mathbf{T}_{k,i}^{\mathrm{[sub]}}, \mathcal{D}_{k,i}, \alpha_{k,i}, \beta_{k,i}, \gamma_{k,i}\right)$, where $k \in \{1,\ldots,K_i\}$ and $K_i$ denotes the number of tasks in request $\mathcal{R}_i$. Accordingly, request $\mathcal{R}_i$ is represented as $\mathcal{R}_i \triangleq \left(\{T_{k,i}\}_{k=1}^{K_i}, d_i, r_i, a_i\right)$, where $d_i$ and $r_i$ [bits] denote the upload and download data sizes of request $i$, respectively, and $a_i$ denotes the time at which the request arrives in the system. Moreover, $\mathcal{D}_{k,i}$ denotes the set of dependency relationships among the subtasks in $\mathbf{T}_{k,i}^{\mathrm{[sub]}}$, i.e., the edges of the corresponding DAG. For each subtask $j$, we define $T_{k,i}^{(j)} \triangleq \left(\alpha_{k,i}^{(j)}, \beta_{k,i}^{(j)}, \gamma_{k,i}^{(j)}\right)$, where $\alpha_{k,i}^{(j)}$, $\beta_{k,i}^{(j)}$, and $\gamma_{k,i}^{(j)}$ denote the required number of FLOPs, the deadline, and the number of deadline-extension requests of subtask $j$, respectively. Accordingly, the total computational requirement of task $T_{k,i}$ is $\alpha_{k,i} = \sum_{j=1}^{d_{k,i}} \alpha_{k,i}^{(j)}$, its deadline is given by $\beta_{k,i} = \max_{j \in \{1,\ldots,d_{k,i}\}} \beta_{k,i}^{(j)}$, and its total number of deadline-extension requests is $\gamma_{k,i} = \sum_{j=1}^{d_{k,i}} \gamma_{k,i}^{(j)}$. In this work, we aim to maximize the number of tasks completed within their deadlines while minimizing the number of deadline-extension requests.

To enhance computational performance, MEC collaboration is conducted as follows: (1) the host MEC server selects a subset of tasks that require assistance for execution, (2) the host MEC collects system information from neighboring MEC servers to determine appropriate helper servers, (3) the selected helper MEC servers execute the migrated tasks and return the computed results to the host MEC server, and (4) the host MEC server waits for the completion of all tasks within a request, aggregates the results, and delivers the final outcome to the user. 
Note that the system allows deadline extension, where each extension increases the task deadline by an amount equal to its computation time, excluding queuing delay. To mitigate excessive queuing delays, each task is subject to a limited number of deadline extensions for all belonging subtasks.

\subsection{Problem Formulation}
Each user is assumed to move according to a Markovian mobility process, following the model in \cite{nguyen2024deadline}. Let the position of user $u \in [1, \mathcal{U}]$ at time $t$ be denoted by $(x_u(t), y_u(t)) \in \mathcal{A}$, where $\mathcal{A}$ [$\text{m}^2$] represents the considered area. The position of base station $s \in [1, |\mathcal{S}|]$ is denoted by $(x_s, y_s) \in \mathcal{A}$. Accordingly, the distance between user $u$ and base station $s$ at time $t$ is given by $l_{u,s}(t) = \sqrt{(x_u(t) - x_s)^2 + (y_u(t) - y_s)^2}$. The channel coefficient between user $u$ and resource block $n \in [1,N]$ of base station $s$ at time $t$ is expressed as $h_{u,s,n}(t) = \frac{g_{u,s,n}(t)}{l_{u,s}(t)^{p/2}}$, where $p \in [2,4]$ is the path-loss exponent and $g_{u,s,n}(t) \sim \mathcal{CN}(0,1)$ denotes small-scale fading.

Let $x_{u,s,n}(t) \in \{0,1\}$ denote the OFDMA resource allocation variable, where $x_{u,s,n}(t)=1$ if resource block $n$ of base station $s$ is assigned to user $u$ at time $t$, and $x_{u,s,n}(t)=0$ otherwise. Due to the orthogonality of OFDMA, each resource block can be allocated to at most one user at a given time, i.e.,
\begin{align}
    \label{eq:constraint1}
    \sum_{u=1}^{\mathcal{U}} x_{u,s,n}(t) \leq 1, \quad \forall s,n,t.
\end{align}
Moreover, each user can be assigned at most one resource block at each time slot, i.e.,
\begin{align}
    \label{eq:constraint2}
    \sum_{s=1}^{|\mathcal{S}|} \sum_{n=1}^{N} x_{u,s,n}(t) \leq 1, \quad \forall u,t.
\end{align}

The achievable data rate of user $u$ on resource block $n$ of base station $s$ at time $t$ is given by $R_{u,s,n}(t) = B_n \log_2 \left(1 + \frac{P_u \left|h_{u,s,n}(t)\right|^2}{N_0 B_n} \right)$, where $B_n$ denotes the bandwidth of each resource block, $P_u$ is the transmit power of user $u$, and $N_0$ [W/Hz] denotes the additive white Gaussian noise power spectral density. Accordingly, the noise power over each resource block is given by $N_0 B_n$. The total data rate of user $u$ associated with base station $s$ at time $t$ is expressed as
$R_{u,s}(t) = \sum_{n=1}^{N} x_{u,s,n}(t) R_{u,s,n}(t).$
Let $\sigma_i = u$ indicate that request $\mathcal{R}_i$ is generated by user $u$. Then, the total transmission delay associated with request $\mathcal{R}_i$ is given by $\tau_i^{[\mathrm{comm}]}(t) = \frac{d_i + r_i}{R_{u,s}(t)}, \quad \text{with } u = \sigma_i$. Note that, with notation $x_{u,s}(t) \in \{0,1\}$, it implies that $s \in [1, |\mathcal{S}|]$ is selected as the host MEC server for computing task $\mathcal{R}_i$ from user $u$ if $x_{u,s}(t) = 1$ . Thus, only one MEC should be selected as the host MEC, and the resource block should be in the associated base station then:

\begin{align}
    \label{eq:contraintMEChost}
    \sum_{s=1}^{\mathcal{|S|}} x_{u,s}(t) = 1, \text{ s.t } x_{u,s,n}(t) \le x_{u,s}(t), \quad \forall u,s,n,t
\end{align}

For computation, we assume that each MEC server is equipped with $M$ computing resources such that 
\begin{align}
    \label{eq:constraint7}
    \sum_{m=1}^{M} f_{s,m} = f_s,
\end{align}
where $f_{s,m}$ denotes the computation capacity of resource $m$ at server $s$. Each computing resource maintains an independent queue for subtask processing, and all queues operate under the earliest-deadline-first scheduling and non-preemptive policy. At each decision epoch, the system observes all MEC servers and their available computing resources. For load balancing, we assume that once an MEC server $s$ is selected to process a task, its subtasks are assigned to the computing resource with the shortest queue within that server. Let $m^\ast = \arg\min_{m \in [1,M]} |\mathbf{q}_{s,m}|$ denote the computing resource with the shortest queue at server $s$, whose queue is represented by $\mathbf{q}_{s,m^\ast} = \{q_{s,m^\ast}(1), q_{s,m^\ast}(2), \ldots, q_{s,m^\ast}(|\mathbf{q}_{s,m^\ast}|)\}$. Since the queue is dynamically updated as new subtasks are assigned, the position of a subtask may change over time. Let $p^\ast$ denote the final position of subtask $T_{k,i}^{(j)}$ in queue $\mathbf{q}_{s,m^\ast}$. Then, the computation delay of subtask $T_{k,i}^{(j)}$ generated by user $\sigma_i = u$ is given by
$\tau_{k,i}^{[\mathrm{comp}],(j)} = \sum_{p=1}^{p^\ast} \frac{\alpha\big(q_{s,m^\ast}(p)\big)}{f_{s,m^\ast}}$,
where $\alpha\big(q_{s,m^\ast}(p)\big)$ denotes the required number of FLOPs of the subtask at position $p$ in the queue. Thus, $T_{k,i}^{(j)}$ satisfies deadline if:
$a_i + \tau_i^{[\mathrm{comm}]}(t)+ \tau_{k,i}^{[\mathrm{comp}],(j)}(t) \leq \beta_{k,i}^{(j)}(t) + (1-\varphi_{k,i}^{(j)})Y$, here, the deadline can be changed if scheduling fails, thus it depends on the time. $Y$ is a number which is larger than any $\beta_{k,i}^{(j)}$ and $\varphi_{k,i}^{(j)} \in \{0,1\}$ denotes a binary variable indicating whether subtask $T_{k,i}^{(j)}$ is completed before its deadline. 

For the computation dependency condition, a task $T_{k,i}$ is ready for execution only if all its preceding tasks have been completed. Define $\varphi_{k,i}(t) \in \{0,1\}$ as a binary variable indicating whether task $T_{k,i}$ has been completed, where $\varphi_{k,i}(t)=1$ if it is completed and $0$ otherwise. Then, the readiness condition of task $T_{k,i}$ is given by
\begin{align}
    \label{eq:constraint3}
    \prod_{T_{k',i} \in \mathcal{D}_{k,i}} \varphi_{k',i}(t) = 1.
\end{align}

The task $T_{k,i}$ is considered successfully completed if and only if all its subtasks are completed before their deadlines, i.e.,
\begin{align}
    \label{eq:constraint4}
    \sum_{j=1}^{d_{k,i}} \varphi_{k,i}^{(j)}(t) = d_{k,i},
\end{align}
otherwise, the task is considered failed. If a subtask is predicted to miss its deadline during scheduling, the system allows a deadline extension, which is recorded by $\gamma_{k,i}^{(j)}(t)$. In this work, the number of deadline extensions is limited by an upper bound, expressed as
\begin{align}
    \label{eq:constraint5}
    \sum_{j=1}^{d_{k,i}} \gamma_{k,i}^{(j)}(t) \leq \Gamma_{k,i}^{\max},
\end{align}
here, $\Gamma_{k,i}^{\max}$ is the maximum number of deadline extensions for task $T_{k,i}$.

Based on the above system model, we aim to jointly optimize task assignment and deadline extension decisions to maximize the number of successfully completed tasks while minimizing the number of deadline extension requests.

Let $x_{k,i,s}(t) \in \{0,1\}$ denote the task migration variable, where $x_{k,i,s}(t)=1$ if task $T_{k,i}$ is processed at MEC server $s$ at time $t$, and $0$ otherwise. Each task is assigned to exactly one MEC server, i.e.,
\begin{align}
    \label{eq:constraint6}
    \sum_{s=1}^{|\mathcal{S}|} x_{k,i,s}(t) = 1, \quad \forall k,i,t,
\end{align}
which implies that the task can be executed at any MEC among $|\mathcal{S}|$, including the host MEC.

Therefore, the optimization problem is formulated where $\eta$ is a weighting coefficient that controls the importance of minimizing the number of deadline extensions. Let $\mathbf{x} = \{x_{u,s,n}(t),\, x_{u,s}(t),\, x_{k,i,s}(t)\mid \forall u,k,i,s,n,t\}$ 
denote the collection of all binary decision variables, thus the problem formulation can be expressed by:

\begin{align}
\mathbf{P1}: \quad 
\max_{\mathbf{x}} \quad &
\sum_{i}\sum_{k} \varphi_{k,i}(t)
-\eta \sum_{i}\sum_{k}\sum_{j=1}^{d_{k,i}} \gamma^{(j)}_{k,i}(t)\\
\text{s.t.} \quad
& \eqref{eq:constraint1}, \eqref{eq:constraint2},\eqref{eq:contraintMEChost},\eqref{eq:constraint7},\eqref{eq:constraint3},\eqref{eq:constraint4},\eqref{eq:constraint5},\eqref{eq:constraint6}, \tag{9a}\\
& \varphi_{k,i}(t),\;\varphi^{(j)}_{k,i}(t) \in \{0,1\}, \quad \forall k,i,j,t, \tag{9b}\\
& \gamma^{(j)}_{k,i}(t) \in \mathbb{Z}_{+}, \quad \forall k,i,j,t. \tag{9c}
\end{align}
Note that $\varphi_{k,i}(t)$, $\varphi_{k,i}^{(j)}(t)$, and $\gamma_{k,i}^{(j)}(t)$ are outcome variables determined by the task-assignment decision $\mathbf{x}$ and the system dynamics. Therefore, constraints involving these variables implicitly restrict the feasible set of $\mathbf{x}$. In this work, we focus on the optimization of task migration; thus, $x_{u,s,n}(t)$ and $x_{u,s}(t)$ are determined by the predefined random resource-allocation and greedy association policies, respectively, whereas $x_{k,i,s}(t)$ is optimized by the proposed learning agent.
\section{Proposed Method}
\begin{figure}
    \centering
    \includegraphics[width=0.90\columnwidth]{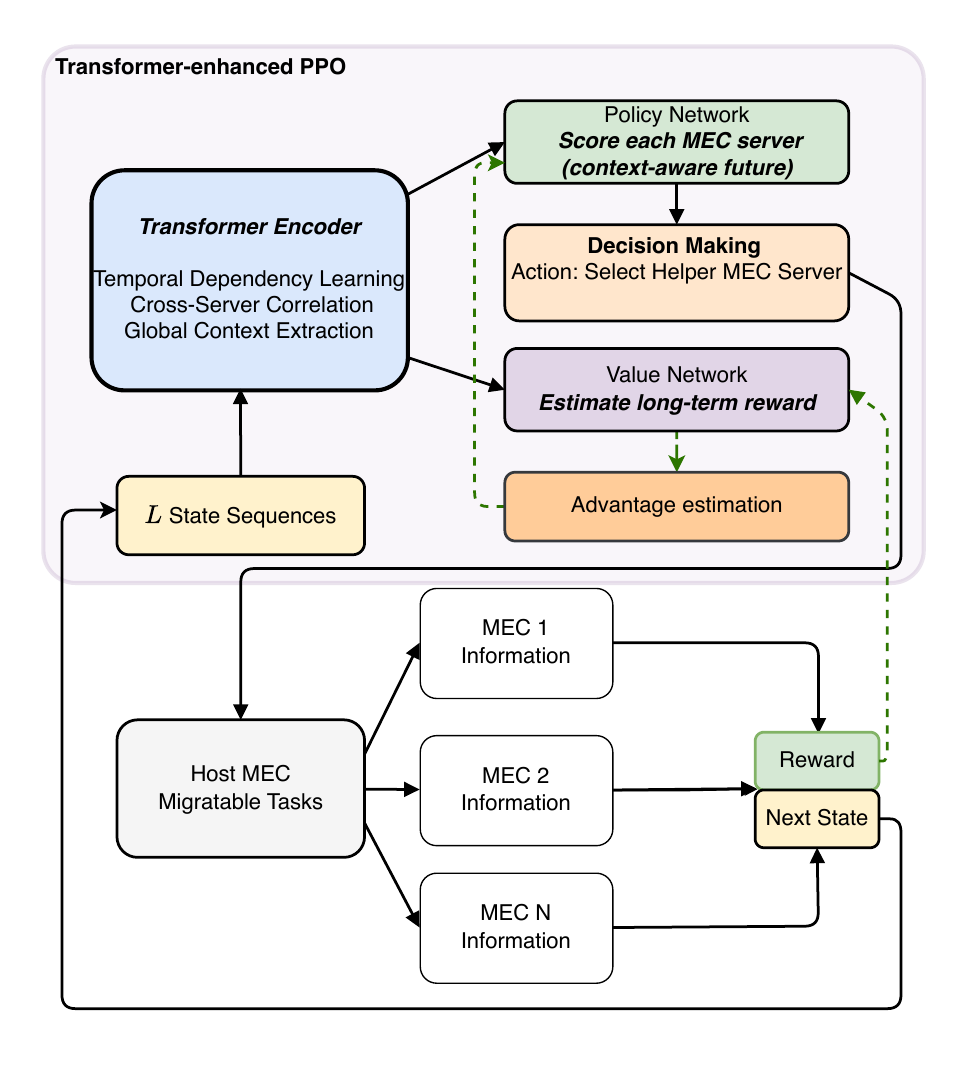}
    \caption{Transformer-enhanced PPO framework for context-aware MEC server selection.}
    \label{fig:proposed_method}
\end{figure}
\vspace{-2mm}
This work focuses on improving the efficiency of collaboration among MEC servers. To simplify the system model, each user is associated with a host MEC server based on a greedy distance-based policy, i.e., the nearest server is selected for task offloading. We assume that each MEC server can broadcast a request for computational assistance at any time instant, where the arrival of such requests follows a general stochastic process. For efficient collaboration, an agent deployed at the host MEC server collects system information from all MEC servers to make appropriate decisions on task migration and resource utilization. In this work, we develop a transformer-based PPO algorithm to efficiently learn the optimal collaboration strategy under dynamic system conditions. The MDP model considered in this work is defined as follows.

\textit{State space:} 
At each decision epoch $t$, the system state is defined as $ \mathbf{s}(t) = \Big\{ \mathcal{X}(t), \mathcal{Q}(t), \mathcal{H}(t), \mathcal{C}(t) \Big\}$, where $\mathcal{X}(t)$ denotes the task-related information, including task dependencies, the number of subtasks $d_{k,i}$, total required resource FLOPs $\alpha_{k,i}$, and deadlines' information $\gamma_{k,i}$, $\beta_{k,i}$, $\mathcal{Q}(t)$ denotes the queue states of all computing resources of all MEC servers and server helper reputation (i.e., computed by the successful tasks per total assigned tasks), $\mathcal{H}(t)$ represents the communication conditions, and $\mathcal{C}(t)$ denotes the computation capacities of MEC servers. To capture temporal correlations, we construct an augmented state by stacking the past $L$ observations $\tilde{\mathbf{s}}(t) = \Big\{ \mathbf{s}(t-L+1), \ldots, \mathbf{s}(t) \Big\}$. This sequence is used as the input to the transformer encoder.

\textit{Transformer Architecture:} The proposed agent employs a time-series Transformer encoder to capture temporal dependencies in the evolution of the MEC system. At each decision epoch, the input to the agent is represented as a sequence $\tilde{\mathbf{s}}(t)\in \mathbb{R}^{L\times F}$, where $F$ denotes the feature dimension of each observation. Each observation vector is first mapped to a latent representation through a linear projection layer. A learnable positional embedding is then added to preserve the temporal order of the observations. The resulting sequence is processed by a Transformer encoder comprising one encoder layer with four self-attention heads, an embedding, and a feed-forward dimension. The encoder employs GELU activation, pre-layer normalization, and no dropout. The contextualized representations generated for all historical observations are aggregated using mean pooling along the temporal dimension, followed by a layer-normalization operation. The resulting global context vector is shared by the actor and critic. Specifically, the actor consists of a linear policy head that produces logits over the candidate supporter MEC servers, whereas the critic consists of a linear value head that estimates the scalar state value. Therefore, the dimension of the action space is equal to the number of available supporter MEC servers. During action selection, a categorical distribution is constructed directly from the policy logits, and the selected action determines the MEC server to which the task is assigned.

\textit{Action space:} 
The action at time $t$ is defined as $a(t) \in \mathcal{S}$, where $a(t)=s$ indicates that the selected MEC server $s$ is chosen to process the task. Therefore, the action corresponds to the task migration decision.

\begin{algorithm}[t]
\caption{Transformer-Enhanced PPO for MEC Collaboration}
\label{alg:transformer_ppo}
\begin{algorithmic}[1]
\State Initialize the MEC environment and the Transformer-enhanced PPO agent.
\For{each decision epoch $t$}
    \State Observe the historical state sequence
    $\tilde{s}(t)=\{s(t-L+1),\ldots,s(t)\}$.
    \State Feed $\tilde{s}(t)$ into the Transformer encoder to obtain the contextual system representation.
    \State Use the policy network to select action $a(t)\in\mathcal{S}$, corresponding to the selected MEC server.
    \State Execute $a(t)$ and observe reward $r(t)$ and next state $s(t+1)$.
    \State Store the current transition for policy updating.
    \If{the policy-update condition is satisfied}
        \State Estimate the advantages using the value network.
        \State Update the policy and value networks using the PPO clipped objective.
        \State Clear the stored transitions.
    \EndIf
\EndFor
\State \Return the trained Transformer-enhanced PPO agent.
\end{algorithmic}
\end{algorithm}

\textit{Reward:} 
In this work, we directly optimize the objective function using deep reinforcement learning (DRL). Therefore, the reward function is carefully designed to be consistent with the objective:
 \begin{align}
    r(t) =
    \frac{
        \omega_1 \sum_{i,k} \varphi_{k,i}(t)
        - \omega_2 \sum_{i,k,j} \gamma_{k,i}^{(j)}(t)
    }{
        \sum_{i,k} \left( d_{k,i} + \Gamma_{k,i} \right)
    },
\end{align}
here $\omega_1$ and $\omega_2$ are the trade-off weights.

\textit{State transition:} 
Given the current state $\mathbf{s}(t)$ and action $a(t)$, the system evolves to the next state $\mathbf{s}(t+1)$ according to the stochastic task arrival process, user mobility, wireless channel variation, and queue dynamics. The transition probability is given by $\Pr\big( \mathbf{s}(t+1) \mid \mathbf{s}(t), a(t) \big)$.

\begin{lemma}
    The formulated decision process satisfies the Markov property and can therefore be modeled as an MDP.
\end{lemma}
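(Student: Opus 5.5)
The plan is to show that the next state depends on the past only through a suitably defined current state and the current action. Concretely, the goal is
\[
\Pr\big(\mathbf{s}(t+1)\mid \mathbf{s}(t),a(t),\mathbf{s}(t-1),a(t-1),\ldots\big)=\Pr\big(\mathbf{s}(t+1)\mid \mathbf{s}(t),a(t)\big).
\]
Doing this directly on $\mathbf{s}(t)$ is awkward, because the agent observes the stacked sequence $\tilde{\mathbf{s}}(t)$. We therefore work with the augmented state $\tilde{\mathbf{s}}(t)$ and with the underlying full system configuration. Markovianity of the augmented chain is what justifies the MDP used by the transformer-PPO agent. We then note that $\tilde{\mathbf{s}}(t+1)$ is a deterministic shift of $\tilde{\mathbf{s}}(t)$ with $\mathbf{s}(t+1)$ appended, so the augmented process is Markov whenever $\mathbf{s}(t)$ is (the frame-stacking argument).

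The main step is a component-by-component decomposition of $\mathbf{s}(t+1)=\{\mathcal{X}(t+1),\mathcal{Q}(t+1),\mathcal{H}(t+1),\mathcal{C}(t+1)\}$.

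\textbf{Task information $\mathcal{X}$.} New requests arrive by a Poisson process with rate $\lambda$. By memorylessness, the arrivals in $(t,t+1]$ are independent of the past. The DAG structure, FLOPs $\alpha$, deadlines $\beta$ and extension counters $\gamma$ of existing tasks are then updated deterministically from their current values, the EDF queue evolution and the action.

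\textbf{Queues $\mathcal{Q}$.} Each queue evolves deterministically under the non-preemptive EDF policy and the shortest-queue rule $m^\ast$. Its inputs are the current queue contents, the capacities $f_{s,m}$, the migrated task $a(t)$ and the new arrivals. The reputation statistic is a ratio of cumulative counts; it is Markov once its numerator and denominator are counted as part of $\mathcal{Q}(t)$.

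\textbf{Channels $\mathcal{H}$.} Users move by the Markovian mobility process of \cite{nguyen2024deadline}, so $l_{u,s}(t+1)$ depends only on the positions at time $t$. The small-scale fading $g_{u,s,n}$ is i.i.d.\ across slots. Resource allocation is random and association is greedy, so both are functions of the current positions and fresh randomness.

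\textbf{Capacities $\mathcal{C}$.} These are static or exogenous.

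With these four facts, the transition kernel factorizes as a product of conditionals, each depending only on $(\mathbf{s}(t),a(t))$ and on noise independent of the past. The reward $r(t)$ is built from $\varphi_{k,i}(t)$ and $\gamma_{k,i}^{(j)}(t)$, so it is a function of $(\mathbf{s}(t),a(t),\mathbf{s}(t+1))$. Together these give the MDP tuple $(\mathcal{S}_{\text{state}},\mathcal{S},P,r)$.

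The main obstacle is that the paper's $\mathbf{s}(t)$ may not literally contain everything the dynamics need. Examples are the residual service time of the subtask currently in service, user positions rather than only channel summaries, and the elapsed time since the last arrival, since decision epochs triggered by a general assistance-request process are not memoryless. The fix I would try first is to define the state as the full system configuration: queue contents with residual work, positions, counters, and the age since the last assistance request (the supplementary-variable technique). Under this state the process is Markov, and the observed $\mathbf{s}(t)$ or $\tilde{\mathbf{s}}(t)$ is a function of it. I would state the lemma for this enriched state, and remark that the history stack $\tilde{\mathbf{s}}(t)$ fed to the transformer serves to approximate the components that are not directly observed.
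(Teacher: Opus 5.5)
Your argument is correct and has the same skeleton as the paper's proof. The paper also reduces the claim to $\Pr(\mathbf{s}(t+1)\mid\mathcal{H}_t)=\Pr(\mathbf{s}(t+1)\mid\mathbf{s}(t),a(t))$ and invokes essentially the same four ingredients: arrivals, Markovian mobility, channel variation, and queue dynamics under $a(t)$.

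Where you differ is the sufficiency premise. The paper simply stipulates that ``by construction'' $\mathbf{s}(t)$ contains all information needed for the future evolution, and then asserts that each driver depends only on $(\mathbf{s}(t),a(t))$. You instead check each driver: independent Poisson increments, i.i.d.\ fading, deterministic non-preemptive EDF updates, and reputation tracked as cumulative counters. Where the state as literally described falls short (residual service, positions versus channel summaries, timing of decision epochs), you enlarge it with supplementary variables.

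Each route buys something different. The paper's route yields a statement directly about the agent's own input, so PPO can be read as solving an MDP in what it observes, but it gets there by assuming the key point. Yours yields a verifiable statement, at the cost that the MDP lives on a latent state. The agent, seeing only $\tilde{\mathbf{s}}(t)$, then formally faces a POMDP, which gives a cleaner rationale for the history stack than the paper's informal one.

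Two things to tidy. First, the frame-stacking step is unnecessary for the lemma, which concerns $\mathbf{s}(t)$, and it is only conditional. If $\mathbf{s}(t)$ needs enrichment, a finite stack is generally not Markov either, so your claim that Markovianity of $\tilde{\mathbf{s}}(t)$ justifies the agent's MDP should be dropped or qualified. Second, the kernel does not literally factor into conditionals each depending only on $(\mathbf{s}(t),a(t))$, because the queue update depends on the new arrivals and channel realizations. State it instead as a random-mapping representation $\mathbf{s}(t+1)=F(\mathbf{s}(t),a(t),\xi_{t+1})$ with $\xi_{t+1}$ independent of $\mathcal{H}_t$.
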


\begin{proof}
Let $\mathbf{s}(t)$ and $a(t)$ denote the system state and action at decision epoch $t$, respectively. To prove the Markov property, it is sufficient to show that the conditional distribution of the next state depends only on the current state and action, i.e.,
\begin{align}
    \label{eq:trasition}
    \Pr\big(\mathbf{s}(t+1) \mid \mathcal{H}_t \big) 
    = \Pr\big(\mathbf{s}(t+1) \mid \mathbf{s}(t), a(t) \big),
\end{align}
where $\mathcal{H}_t = \{\mathbf{s}(0), a(0), \ldots, \mathbf{s}(t), a(t)\}$ denotes the system history up to time $t$. By construction, the state $\mathbf{s}(t)$ contains all necessary information for future evolution, including task status, dependency conditions, queue states, communication conditions, and available computation resources. The evolution of the system from time $t$ to $t+1$ is governed by: i) the stochastic task arrival process, ii) the Markovian user mobility model, iii) wireless channel variations, and iv) queue dynamics under the selected action $a(t)$. All these dynamics depend only on the current state $\mathbf{s}(t)$ and action $a(t)$, and are independent of past states and actions given $\mathbf{s}(t)$.

Therefore, the transition probability satisfies \eqref{eq:trasition}, which verifies the Markov property. Hence, the decision process can be modeled as an MDP.
\end{proof}

\begin{table}[t]
\footnotesize
\caption{Simulation and Training Parameters}
\label{tab:system-model-parameters}
\centering
\setlength{\tabcolsep}{4pt}
\begin{tabular}{l c l c}
    \toprule
    \textbf{Parameter} & \textbf{Value} &
    \textbf{Parameter} & \textbf{Value} \\
    \midrule

    \multicolumn{4}{c}{\textit{Simulation Parameters}} \\
    Cellular bandwidth & $10$ MHz &
    Number of RBs $N$ & $10$ \\

    Bandwidth per RB $B_n$ & $1$ MHz &
    User transmit power $P_u$ & $199.526$ mW \\

    Path loss exponent $p$ & $3$ &
    Noise PSD $N_0$ & $-174$ dBm/Hz \\

    \midrule
    \multicolumn{4}{c}{\textit{Transformer Architecture}} \\

    Embedding dimension & $64$ &
    Encoder layers & $1$ \\

    Attention heads & $4$ &
    Feed-forward dimension & $128$ \\

    Dropout & $0$\% &
    Input sequence length & $8$ \\

    \midrule
    \multicolumn{4}{c}{\textit{PPO Training Parameters}} \\

    Learning rate & $10^{-4}$ &
    Discount factor $\gamma$ & $0.99$ \\

    GAE parameter $\lambda$ & $0.95$ &
    PPO clip ratio $\epsilon$ & $0.2$ \\

    Value-loss coefficient & $0.5$ &
    Entropy coefficient & $0.005$ \\

    Maximum gradient norm & $0.5$ \\

    \bottomrule
\end{tabular}
\end{table}

Fig.~\ref{fig:proposed_method} illustrates the overall framework of the proposed transformer-enhanced PPO for MEC collaboration. At each decision epoch, the agent observes a sequence of past system states, denoted as $\{ \mathbf{s}(t-L+1), \ldots, \mathbf{s}(t) \}$, which captures the temporal evolution of queue dynamics, task dependencies, and wireless conditions. This historical state sequence is fed into a transformer encoder to extract context-aware representations by modeling both temporal dependencies and cross-server correlations.

The output of the transformer is then utilized by the policy and value networks in the PPO framework. Specifically, the policy network (actor) scores each MEC server and selects the most suitable helper server for task migration, while the value network (critic) estimates the long-term expected reward based on the current system context. The selected action is executed in the collaborative MEC environment, where the host MEC offloads tasks that can migrate to candidate helper MEC servers.
\begin{table*}[t]
\centering
\caption{Subtasks' completion rate (Comp.) and remaining adjustment (Rem.) rates across operations and methods.}
\label{tab:comparision}
\scriptsize
\setlength{\tabcolsep}{0.60pt}  
\renewcommand{\arraystretch}{1.0}
\begin{tabular}{c|ccc|ccc|ccc|ccc|ccc|ccc|ccc|ccc}
\toprule
\textbf{Op.}
& \multicolumn{6}{c|}{feed\_forward}
& \multicolumn{6}{c|}{self\_attention}
& \multicolumn{6}{c|}{layer\_norm}
& \multicolumn{6}{c}{residual\_add} \\
\cmidrule(lr){1-7} \cmidrule(lr){8-13} \cmidrule(lr){14-19} \cmidrule(lr){20-25}
\multirow{3}{*}{\textbf{Users}} 
& \multicolumn{3}{c|}{Comp.} & \multicolumn{3}{c|}{Rem.}
& \multicolumn{3}{c|}{Comp.} & \multicolumn{3}{c|}{Rem.}
& \multicolumn{3}{c|}{Comp.} & \multicolumn{3}{c|}{Rem.}
& \multicolumn{3}{c|}{Comp.} & \multicolumn{3}{c}{Rem.} \\
\cmidrule(lr){2-4} \cmidrule(lr){5-7}
\cmidrule(lr){8-10} \cmidrule(lr){11-13}
\cmidrule(lr){14-16} \cmidrule(lr){17-19}
\cmidrule(lr){20-22} \cmidrule(lr){23-25}
& Trans. & Trad. & No-Migr.
& Trans. & Trad. & No-Migr.
& Trans. & Trad. & No-Migr.
& Trans. & Trad. & No-Migr.
& Trans. & Trad. & No-Migr.
& Trans. & Trad. & No-Migr.
& Trans. & Trad. & No-Migr.
& Trans. & Trad. & No-Migr. \\
\midrule
10 & \textbf{0.91} & 0.90 & 0.90 & \textbf{0.91} & 0.91 & 0.89
   & \textbf{0.97} & \textbf{0.97} & \textbf{0.97} & \textbf{0.91} & \textbf{0.91} & 0.89
   & \textbf{1.00} & 0.99 & 0.99 & \textbf{0.96} & \textbf{0.96} & 0.95
   & \textbf{0.91 }& \textbf{0.91} & 0.90 & \textbf{0.91} & \textbf{0.91} & 0.89 \\
20 & \textbf{0.64} & 0.63 & 0.55 & 0.70 & \textbf{0.71} & 0.58
   & \textbf{0.88} & 0.87 & 0.83 & \textbf{0.70} & \textbf{0.70} & 0.59
   & \textbf{0.98} & 0.97 & \textbf{0.98} & \textbf{0.85} & \textbf{0.85} & 0.81
   & \textbf{0.66} & 0.65 & 0.57 & \textbf{0.72} & \textbf{0.72} & 0.59 \\
30 & \textbf{0.39} & 0.38 & 0.36 & 0.59 & \textbf{0.60} & 0.49
   & \textbf{0.70} & 0.68 & 0.67 & 0.54 & \textbf{0.56} & 0.48
   & \textbf{0.96} & 0.92 & \textbf{0.96} & 0.77 & \textbf{0.79} & 0.76
   & \textbf{0.42} & 0.41 & 0.39 & 0.60 & \textbf{0.62} & 0.51 \\
40 & \textbf{0.31} & 0.31 & 0.23 & \textbf{0.51} & 0.49 & 0.38
   & \textbf{0.55} & 0.54 & 0.46 & \textbf{0.44} & 0.42 & 0.33
   & \textbf{0.97} & \textbf{0.97} &\textbf{0.97} & \textbf{0.76} & \textbf{0.76} & 0.72
   & \textbf{0.33} & 0.32 & 0.25 & \textbf{0.52} & 0.5 & 0.40 \\
50 & \textbf{0.26} & \textbf{0.26} & 0.25 & \textbf{0.46} & 0.45 & 0.35
   & \textbf{0.49} & \textbf{0.49} & 0.45 & \textbf{0.40} & 0.39 & 0.33
   & \textbf{0.95} & \textbf{0.95} & \textbf{0.95} & \textbf{0.75} & \textbf{0.75} & 0.71
   & \textbf{0.28} & \textbf{0.28} & 0.26 & \textbf{0.46} & \textbf{0.46} & 0.36 \\
\bottomrule
\end{tabular}
\end{table*}
After task execution, the environment returns the reward and the updated system state, which are used to train the critic and compute the advantage function for policy updates. This closed-loop interaction enables the agent to learn an efficient collaboration strategy by jointly considering system dynamics, resource availability, and task requirements. By leveraging the transformer, the proposed method effectively captures long-term temporal patterns and inter-server relationships, leading to more informed and robust decision-making compared to conventional PPO approaches.

\section{Numerical Study}
\label{sec:numerical_simulation}

\subsection{System Parameters and Baselines}
The simulated MEC system consists of six servers, each equipped with a heterogeneous number of GPU cards randomly selected from $\{1, 2, 4, 8, 12\}$. The computational capability of each GPU card is set to $3.74 \times 10^{12}$ FLOPs, which is representative of current high-performance accelerators. The network coverage is modeled as a two-dimensional square area of size $1000 \times 1000$ [$m^2$]. User devices are uniformly distributed within this region, with the number of users ranging from 10 to 50 per server on average.


Each user generates inference requests associated with LLM workloads following a GPT-3-like architecture. Each subtask corresponds to a component in the model, including layer normalization, multi-head self-attention, feed-forward networks, and residual addition. The prompt length is uniformly distributed between 30 and 50 tokens, while the corresponding transmission data size ranges from 100 to 1000 [KB].
The computational cost of each transformer component varies significantly: multi-head self-attention (with 96 heads) requires approximately $3.2\times10^{8}$ to $1.2\times10^{9}$ FLOPs, followed by feed-forward networks with $2.0\times10^{8}$ to $3.3\times10^{8}$ FLOPs. In contrast, layer normalization and residual addition incur much lower costs, on the order of $10^{4}$ to $10^{5}$ FLOPs. These values are derived based on standard FLOPs analysis of multi-head self-attention and feed-forward networks \cite{vaswani2017attention, shoeybi2019megatron}, and reflect practical implementation variations across different input sizes and system configurations.
Additional system and communication parameters are summarized in Table~\ref{tab:system-model-parameters}. These include wireless transmission settings, channel characteristics, and infrastructure configurations.



\begin{figure}[t]
\centering
\begin{subfigure}[t]{0.85\columnwidth}
\centering
\includegraphics[width=\linewidth]{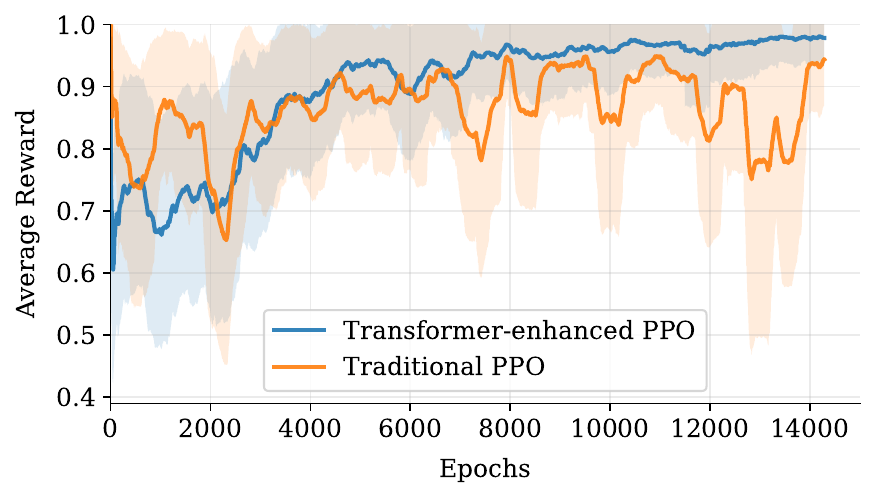}
\caption{Training performance of the proposed method.}
\label{fig:combined_results_1}
\end{subfigure}
\hfill
\begin{subfigure}[t]{0.85\columnwidth}
\centering
\includegraphics[width=\linewidth]{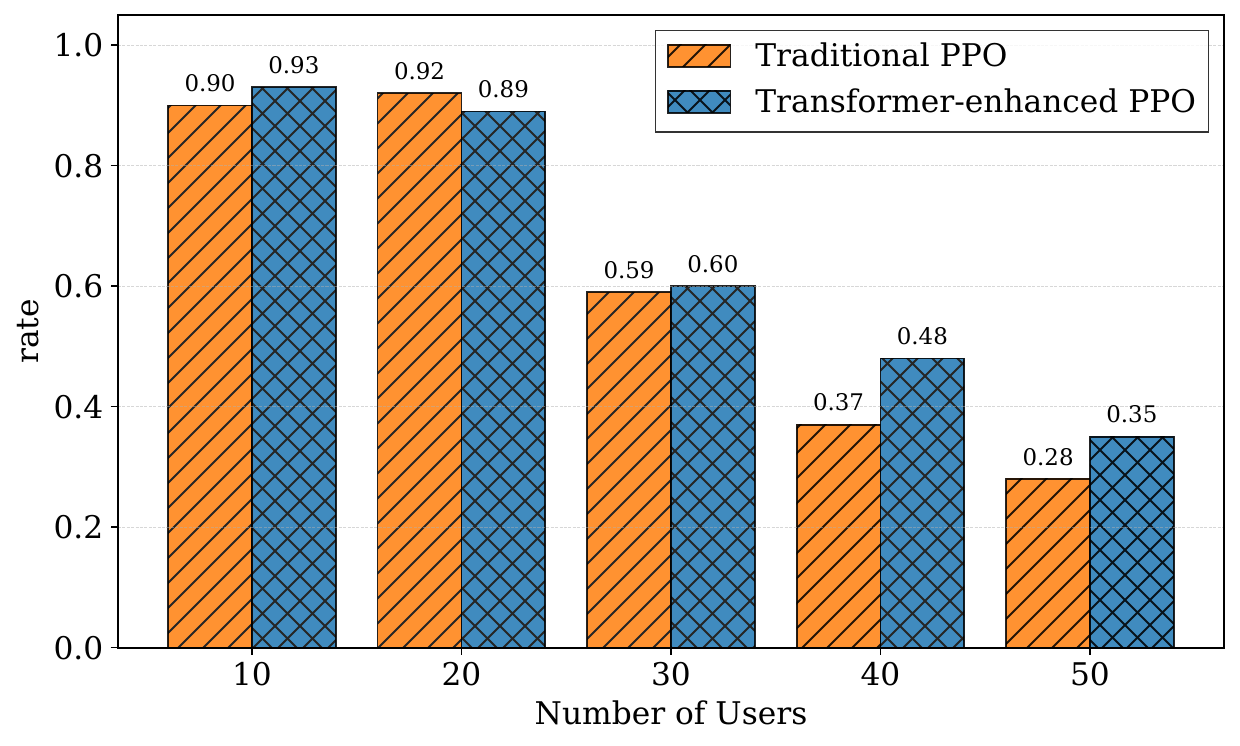}
\caption{Server completion tasks' rate versus users.}
\label{fig:combined_results_2}
\end{subfigure}
\caption{Performance comparison of training and task completion.}

\end{figure}




\subsection{Performance Comparison}
In order to demonstrate the performance of the proposed DRL approach with the system, we design a traditional PPO agent without a transformer encoder component (Traditional PPO). Furthermore, the proposed DRL method is also compared with the system where the migration is not applied (No migration).
\subsubsection{Convergence of Transformer-enhanced PPO} As shown in Fig.~\ref{fig:combined_results_1}, the convergence of the proposed DRL is slower in the first 4000 epochs compared to the traditional PPO, which converges in less than 2000 epochs. However, the average reward curve of the proposed DRL is consistent and exhibits less fluctuation than the baseline scheme. At the end of the training epochs, the proposed DRL's convergence shows a significant gap with the others. The slower convergence of the proposed method is mainly due to the increased model complexity and the use of historical state sequences, which enlarges the learning space. However, the transformer enables the agent to capture temporal dependencies and helps reduce variance through its attention mechanisms, resulting in more stable training behavior and improved long-term performance. In another aspect, we compare the completion rate of migrated tasks in Fig.~\ref{fig:combined_results_2}, the result shows that the proposed transformer-enhanced PPO has consistent and significant gaps compared to the other two algorithms. This shows the  credibility of the proposed algorithm, identical to the results shown in the learning curve. 
\subsubsection{The completion rate of migration tasks} Fig.~\ref{fig:completion_rate} shows the completion rates of the different methods, which directly impact overall task completion performance. Under different operators, two key characteristics are observed. First, the migration policy increases both the number of subtasks completed during computation and the number of remaining deadline adjustment opportunities in the system. Second, the Transformer-enhanced PPO improves the total number of completed subtasks by 2\%, as shown in Fig.~\ref{fig:completion_rate}, which is accompanied by a 9\% increase in migrated-task completion performance, as shown in Fig.~\ref{fig:combined_results_2}. Note that these results are obtained with 40 users, where even a 1\% improvement can correspond to millions of additional successfully completed subtasks.
The completion rate and the remaining number of deadline extensions for subtasks under varying numbers of users are reported in Table~\ref{tab:comparision}. As the number of users increases, the completion rate consistently decreases across most operations, indicating reduced system effectiveness under higher load. In another aspect, simpler operations such as layer normalization remain highly stable, while more complex components like self-attention and feed-forward networks exhibit significant performance degradation. In addition, the transformer-enhanced PPO maintains higher completion rates and remaining adjustments compared to baseline methods, demonstrating better robustness under scaling.
\begin{figure}
\centering
\begin{subfigure}[t]{0.85\columnwidth}
\centering
\includegraphics[width=\linewidth]{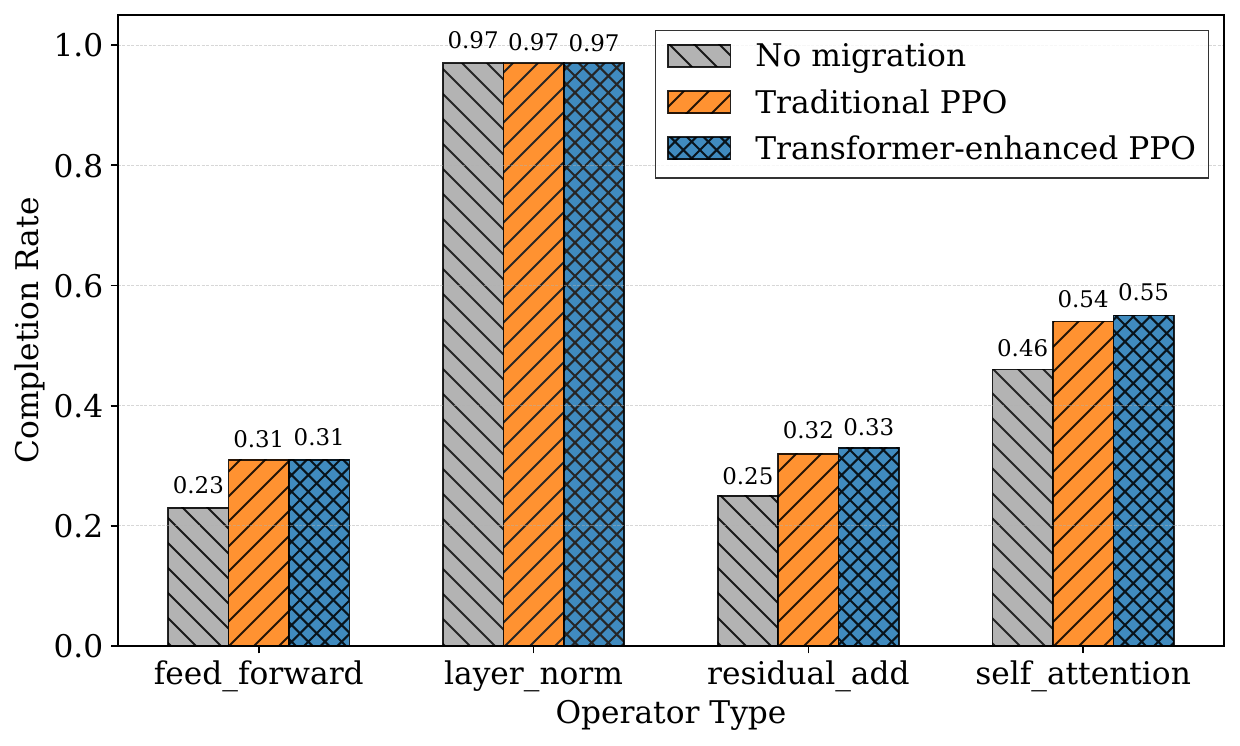}
\caption{Completion rate of subtasks}
\label{fig:completion_rate}
\end{subfigure}
\hfill
\begin{subfigure}[t]{0.85\columnwidth}
    \centering
    \includegraphics[width=\linewidth]{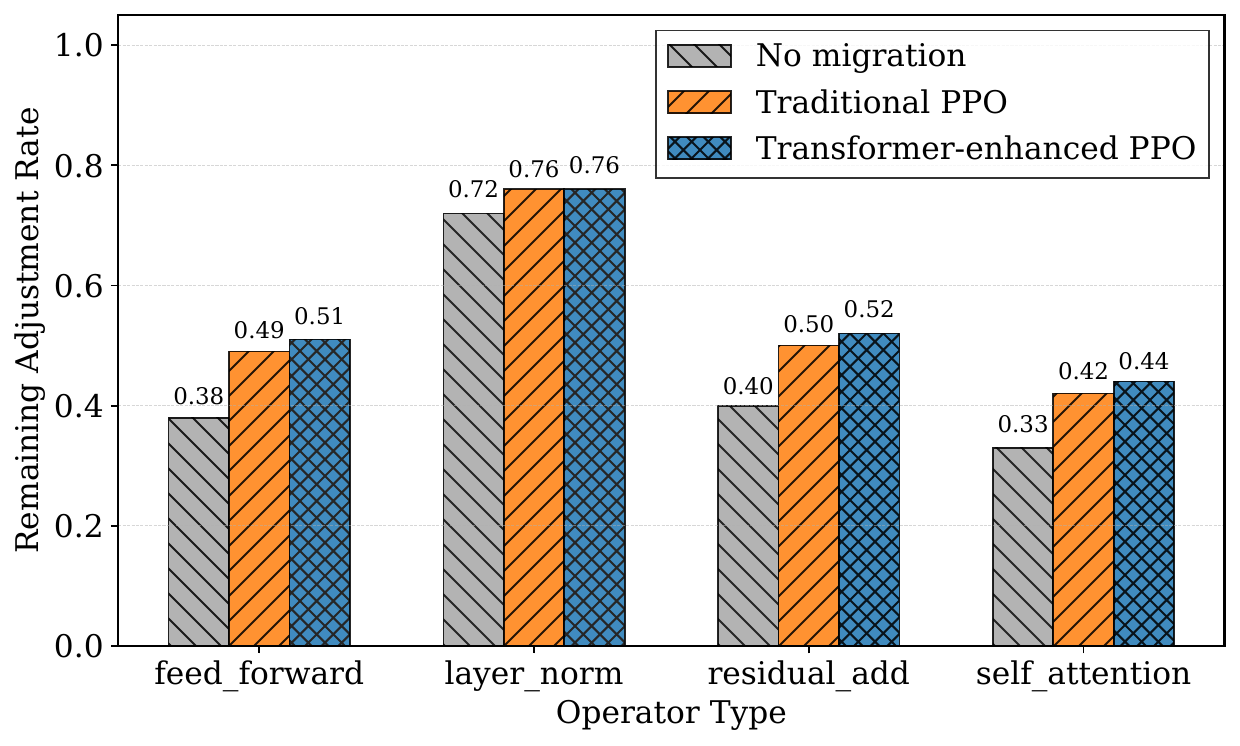}
    \caption{Remaining adjustment rate.}
    \label{fig:remaining_rate}
\end{subfigure}

\caption{Completion rate and remaining adjustment across operator types (40 users).}
\end{figure}
\section{Conclusion}
\label{sec:cons}
This work has proposed a learning-based collaborative MEC framework for LLM inference under soft-deadline constraints. Transformer-enhanced PPO approach effectively captures temporal system information as well as cross-server interactions, enabling more informed task-migration decisions. The results demonstrate that the proposed method achieves higher task completion rates with fewer deadline extensions compared to conventional approaches. In the future, we will extend the framework by incorporating adaptive user association and more realistic system dynamics, as well as exploring scalable multi-agent learning schemes for large-scale MEC deployments. In addition, integrating resource-aware LLM model compression and dynamic workload partitioning remains a promising direction to further improve system efficiency. Furthermore, the current framework assumes profiled computational workloads and a high-speed inter-MEC backhaul. Future work will consider runtime variability in LLM inference, bursty traffic, backhaul congestion, and extension-duration-aware objectives.
\section*{Acknowledgment}
This work was supported by the Swedish Government Agency for Innovation Systems (Vinnova) through the 6G-FOX project under Grant No. 2024-02443, and by ELLIIT, the Linköping-Lund initiative on IT and mobile communication.

\bibliographystyle{./trans/IEEEtran}
\bibliography{ ./trans/IEEEabrv, refs}
\end{document}